\theoremstyle{remark}
\newtheorem*{rem*}{\protect\remarkname}
\theoremstyle{plain}
\newtheorem{thm}{\protect\theoremname}
\theoremstyle{plain}
\newtheorem{lem}{\protect\lemmaname}
\newcolumntype{C}[1]{>{\centering\arraybackslash}p{#1}}
\newcolumntype{J}[1]{>{\justify\arraybackslash}p{#1}}
\newcolumntype{R}[1]{>{\RaggedLeft\arraybackslash}p{#1}}
\newcolumntype{Q}[1]{>{\columncolor{Gray}\RaggedLeft\arraybackslash}p{#1}}
\newcolumntype{L}[1]{>{\RaggedRight\arraybackslash}p{#1}}
\newcolumntype{G}{@{\extracolsep{0.5cm}}l@{\extracolsep{0pt}}}%
\newcolumntype{P}[1]{>{\centering\arraybackslash}p{#1}}
\newcolumntype{Y}{>{\centering\arraybackslash}X}
\newcommand{\nhphantom}[1]{\sbox0{#1}\hspace{-\the\wd0}} 
\providecommand{\lemmaname}{Lemma}
\providecommand{\remarkname}{Remark}
\providecommand{\theoremname}{Theorem}
\begin{document}
\title{Characterizing Correlation Matrices that Admit a Clustered Factor
Representation\emph{\normalsize{}\medskip{}
}}
\author{\textbf{Chen Tong}$^{\ddagger}$\thanks{Chen Tong acknowledges financial support from the Ministry of Education
of China, Humanities and Social Sciences Youth Fund (22YJC790117).}\textbf{ } and\textbf{ Peter Reinhard Hansen}$^{\mathsection}$\bigskip{}
 \\
\\
 {\normalsize{}$^{\ddagger}$}\emph{\normalsize{}Department of Finance, School
of Economics, Xiamen University}{\normalsize{} }\\
 {\normalsize{}$^{\mathsection}$}\emph{\normalsize{}University of North Carolina
\& Copenhagen Business School}{\normalsize{} }\emph{\normalsize{}\medskip{}
 }}
\date{\emph{\normalsize{}\today}}
\maketitle
\begin{abstract}
The Clustered Factor (CF) model induces a block structure on the correlation
matrix and is commonly used to parameterize correlation matrices.
Our results reveal that the CF model imposes superfluous restrictions
on the correlation matrix. This can be avoided by a different parametrization,
involving the logarithmic transformation of the block correlation
matrix.

\bigskip{}
\end{abstract}
\textit{\small{}{\noindent}Keywords:}{\small{} Block correlation matrix,
copula, clustering, factor model.}{\small\par}

\noindent \textit{\small{}{\noindent}JEL Classification:}{\small{}
C38 }{\small\par}

\clearpage{}

\section{Introduction}

In empirical models involving high-dimensional covariance matrices,
it is often beneficial to introduce a parsimonious structure to mitigate
issues stemming from overfitting. A commonly used approach is the
Clustered Factor (CF) model, which is characterized by a linear factor
structure and group clustering, where factor loadings are shared within
each group.

The clustered factor model emerged from high-dimensional copula models
with a multi-factor structure, see e.g \citet{KrupskiiJoe2013} and
\citet{CrealTsay2015}. Clustering became a popular additional structural
component, because it is easy to interpret and makes the model more
parsimonious, see \citet{KrupskiiJoe2015}, \citet[2017b, 2023]{OhPatton2017JBES},
\citet{MannerStarkWied2019}, and \citet{OpschoorLucasBarraVanDick:2021}.\nocite{OhPatton2017}\nocite{OhPatton2023}

It is well known that the CF model induces a block structure on the
correlation matrix (whenever variables are ordered by group assignments).
However, the CF model is not consistent with all block correlation
matrices. It cannot generate negative within-group correlations. However,
this is unlikely to be problematic in empirical applications, because
variables in the same cluster are expected to share traits and be
positively correlated. It is more problematic that the CF model imposes
other restrictions that rules out a class of block correlation matrices,
which seem perfectly reasonable and ordinary from an empirical viewpoint.
Below we fully characterize the class of block correlation matrices
that are coherent with the CF model, and express the superfluous restrictions
the CF model imposes as a testable hypothesis. We then proceed to
highlight an alternative parametrization of block correlation matrices
by \citet{ArchakovHansen:CanonicalBlockMatrix}, which is based on
the matrix logarithm of the correlation matrix of \citet{ArchakovHansen:Correlation}.
This parametrization uniquely parametrizes any non-singular block
correlation matrix, without imposing additional structure. A simple
linear factor structure is also applicable to this parametrization,
if a more parsimonious model is required.

The rest of this paper is organized as follows. In Section \ref{sec:FactorBlock},
we introduce the required notation and the CF model followed by the
theoretical results that characterizes the block correlation matrices
that admit CF model. We express the CF structure as a testable hypothesis,
and detail a testing procedure for selecting the number of factors.
In Section 3, we presents the alternative parametrization by \citet{ArchakovHansen:CanonicalBlockMatrix}
and illustrate it with a simple example. We conclude in Section 4
with a brief summary.

\section{Clustered Factor Models and Block Correlation Matrices\label{sec:FactorBlock}}

Consider an $n$-dimensional random variable, $X\in\mathbb{R}^{n}$,
with non-singular covariance matrix $\Sigma=\mathrm{var}(X)$. Then
the corresponding correlation matrix is given by 
\[
C=\Lambda_{\sigma}^{-1}\Sigma\Lambda_{\sigma}^{-1},
\]
where $\Lambda_{\sigma}=\mathrm{diag}(\sqrt{\Sigma_{11}},\ldots,\sqrt{\Sigma_{nn}})$,
and the vector of standardized variables, 
\[
Z=\Lambda_{\sigma}(X-\mu),\qquad\mu=\mathbb{E}X,
\]
is such that $C=\mathbb{E}(ZZ^{\prime})$.

In the copula literature, it is common to parametrize $C$ with the
clustered factor (CF) model, which is characterized by a linear factor
structure for $Z$ and a partitioning of the variables into clusters/groups.
The factor structure is defined by,
\begin{equation}
Z_{i}=b_{i}^{\prime}f+\varepsilon_{i},\qquad i=1,\ldots,n,\label{eq:FactorModel}
\end{equation}
where $f\in\mathbb{R}^{r}$ is a common vector with $r$ orthogonal
factors, $\mathbb{E}f=0$ and $\mathrm{var}(f)=I_{r}$, and the idiosyncratic
variables $(\varepsilon_{1},\ldots,\varepsilon_{n})$ are mutually
uncorrelated and uncorrelated with the factors, $\mathrm{cov}(f,\varepsilon_{i})=0$
and $\mathrm{cov}(\varepsilon_{i},\varepsilon_{j})=0$ for all $i\neq j$.
The vectors of \emph{factor loadings} are given by $b_{i}={\rm corr}(f,Z_{i})\in\mathbb{R}^{r}$,
for $i=1,\ldots,n$, and it follows that $\mathrm{var}(\varepsilon_{i})=1-b_{i}^{\prime}b_{i}$. 

A partitioning of the $n$ elements into $K<n$ groups adds additional
structure. It is assumed that the factor loadings are common within
each group, such that
\begin{equation}
b_{i}=\beta_{k}\quad\text{for all }i\in{\rm G}_{k},\label{eq:FactorFiniteSet}
\end{equation}
where $({\rm G}_{1},\ldots,{\rm G}_{K})$ is a partition of $\{1,\ldots,n\}$,
with ${\rm G}_{k}$ containing $n_{k}$ elements, $k=1,\ldots,K$,
such that $n=\sum_{k}n_{k}$. 

The CF model implies that the correlation between two variables is
solely determined by their group classification. For $i\in{\rm G}_{k}$
and $j\in{\rm G}_{l}$ with $i\neq j$ we have
\[
C_{ij}={\rm corr}(z_{i},z_{j})=\rho_{kl}\equiv\beta_{k}^{\prime}\beta_{l}.
\]
By rearranging $X_{1},\ldots,X_{n}$ according to their group assignments,
we obtain a block correlation matrix that can be expresses as 
\begin{equation}
\ensuremath{C=\left[\begin{array}{cccc}
C_{[1,1]} & C_{[1,2]} & \cdots & C_{[1,K]}\\
C_{[2,1]} & C_{[2,2]}\\
\vdots &  & \ddots\\
C_{[K,1]} &  &  & C_{[K,K]}
\end{array}\right]},\label{eq:BlockC}
\end{equation}
where $C_{[k,l]}$ is an $n_{k}\times n_{l}$ matrix given by
\[
C_{[k,k]}=\left[\begin{array}{cccc}
1 & \rho_{kk} & \cdots & \rho_{kk}\\
\rho_{kk} & 1 & \ddots\\
\vdots & \ddots & \ddots\\
\rho_{kk} &  &  & 1
\end{array}\right]\ensuremath{\quad}{\rm and}\ensuremath{\quad C_{[k,l]}=\left[\begin{array}{ccc}
\rho_{kl} & \cdots & \rho_{kl}\\
\vdots & \ddots\\
\rho_{kl} &  & \rho_{kl}
\end{array}\right]\quad}{\rm if}\ \ensuremath{k\neq l}.
\]

The advantage of a block correlation matrix, is that its number of
unique correlations is (at most) $d=K\left(K+1\right)/2$, whereas
the number of distinct correlations in an unrestricted $n\times n$
correlation matrix is $n\left(n-1\right)/2$. If $K$ is fixed, then
$d$ does not increase with $n$, which makes it possible to model
high-dimensional correlation matrices with relatively few parameters. 

\subsection{When is the Clustered Factor Structure Applicable}

An interesting question is whether a give block correlation matrix
can be expressed as a CF model. Any block correlation matrix is clearly
coherent with a group partitioning (\ref{eq:FactorFiniteSet}), but
additional structure is needed for it to have the representation in
(\ref{eq:FactorModel}). In some applications, it will be relevant
to know if the factor structure rules out empirically relevant correlation
matrices. It is therefore interesting to characterize the set of block
correlation matrices that are compatible with the CF model. 

For later use, we define the matrix $B=\left(\beta_{1},\beta_{2},\ldots,\beta_{K}\right)^{\prime}\in\mathbb{R}^{K\times r}$,
where $\beta_{k}\in\mathbb{R}^{r\times1}$ is the vector of group-specific
factor loadings, as defined in (\ref{eq:FactorModel}) and (\ref{eq:FactorFiniteSet}).

From a block correlation matrix, $C$, we define the following $K\times K$
matrix
\begin{equation}
A^{*}=\left[\begin{array}{cccc}
\rho_{11} & \rho_{12} & \cdots & \rho_{1K}\\
\rho_{21} & \rho_{22}\\
\vdots &  & \ddots\\
\rho_{K1} &  &  & \rho_{KK}
\end{array}\right].\label{eq:Astar}
\end{equation}
In this paper we focus, without loss of generality, on the case where
$C$ is nonsingular, which implies that $\rho_{kk}<1$ for all within-group
correlations, $k=1,\ldots,K$.
\begin{rem*}
If $n_{k}=1$ for some $k$, then $A^{\ast}$ is not fully defined
by $C$, because there is no within-group correlation in the $k$-th
group, leaving $\rho_{kk}$ undefined. This is the reason that we
treat the case with singleton groups separately in the following Theorem.
\end{rem*}
\begin{thm}
\label{thm:FactorRepresentation}Let $C$ be a nonsingular block correlation
matrix. If $\min_{k}n_{k}\geq2$, then $C$ has a factor representation
if and only if $A^{*}$ is positive semi-definite. If, instead, $n_{k_{1}}=\cdots=n_{k_{m}}=$1
(and $n_{k}\geq2$ for all other), then $C$ has a factor representation
if and only if there exist $\rho_{k_{1}k_{1}},\ldots,\rho_{k_{m}k_{m}}<1$
such that $A^{*}$ is positive semi-definite.
\end{thm}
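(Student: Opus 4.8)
The plan is to prove the two implications separately, with the identity $A^{*}=BB'$ as the common bridge. Under a CF representation one has $C_{ij}=\beta_k'\beta_l$ whenever $i\in\mathrm{G}_k$, $j\in\mathrm{G}_l$ and $i\neq j$, so the between-group correlations, together with the within-group correlations of the groups with $n_k\geq2$, satisfy $\rho_{kl}=\beta_k'\beta_l$; defining $\rho_{k_ik_i}:=\beta_{k_i}'\beta_{k_i}$ at the singleton groups by the same formula gives $A^{*}=BB'$ with $B=(\beta_1,\dots,\beta_K)'$. So ``$C$ admits a CF representation'' ought to be equivalent to ``$A^{*}$ is a Gram matrix'', i.e. positive semi-definite --- modulo the singleton groups, for which $A^{*}$ is not determined by $C$ and must be handled separately.

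For necessity I would argue as follows: suppose $C$ has a CF representation with group loadings $\beta_1,\dots,\beta_K$. By the identity above (and the convention $\rho_{k_ik_i}:=\beta_{k_i}'\beta_{k_i}$ at singletons), $A^{*}=BB'$, which is positive semi-definite; this already is the claim when $\min_k n_k\geq2$. When singletons are present, note in addition that at each singleton $\mathrm{G}_{k_i}=\{j_i\}$ we have $\rho_{k_ik_i}=\beta_{k_i}'\beta_{k_i}=1-\mathrm{var}(\varepsilon_{j_i})<1$, the idiosyncratic term $\varepsilon_{j_i}$ being non-degenerate, so the required values $\rho_{k_1k_1},\dots,\rho_{k_mk_m}<1$ exist.

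For sufficiency I would run the construction backwards. Assume $A^{*}\succeq 0$, with the singleton diagonal entries, if any, equal to the values $<1$ furnished by the hypothesis; for every group with $n_k\geq2$ we already have $\rho_{kk}<1$ because $C$ is nonsingular. Write $A^{*}=BB'$ with $B\in\mathbb{R}^{K\times r}$, $r=\mathrm{rank}(A^{*})$ (for instance $B=UD^{1/2}$ from a spectral decomposition $A^{*}=UDU'$), and let $\beta_k$ be the $k$-th row of $B$, so that $\beta_k'\beta_k=\rho_{kk}<1$ for every $k$. Let $f$ be a standard $r$-vector and, uncorrelated with $f$ and mutually uncorrelated, let $\varepsilon_i$ be scalars with $\mathrm{var}(\varepsilon_i)=1-\rho_{kk}>0$ for $i\in\mathrm{G}_k$; put $Z_i=\beta_k'f+\varepsilon_i$ for $i\in\mathrm{G}_k$. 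Then $\mathrm{var}(Z_i)=\beta_k'\beta_k+(1-\rho_{kk})=1$, $\mathrm{corr}(f,Z_i)=\beta_k$, and $\mathrm{cov}(Z_i,Z_j)=\beta_k'\beta_l=\rho_{kl}$ for $i\in\mathrm{G}_k$, $j\in\mathrm{G}_l$, $i\neq j$; hence $(\ref{eq:FactorModel})$--$(\ref{eq:FactorFiniteSet})$ hold and the correlation matrix of $Z$ is exactly the block matrix $C$ of $(\ref{eq:BlockC})$.

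The easy part is this reduction to a Gram/spectral factorization. The main obstacle is the singleton case, where $A^{*}$ is only partially specified by $C$: one has to check that the completions of the free diagonal which keep $A^{*}$ positive semi-definite are precisely the values $\beta_{k_i}'\beta_{k_i}$ attainable in some CF representation of $C$, and --- the genuinely delicate point --- that the strict bound $\rho_{k_ik_i}<1$ cannot be weakened to $\leq1$. The latter reflects non-degeneracy of the idiosyncratic term at $j_i$: already with $K=2$ (one singleton and one group of size two) one can write down a nonsingular block correlation matrix $C$ for which every positive-semi-definite completion of $A^{*}$ has $\rho_{k_ik_i}\geq1$, so that $C$ has no CF representation --- which is exactly why the free diagonal entries appear, with the strict inequality, in the statement.
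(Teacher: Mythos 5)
Your proof is correct and follows essentially the same route as the paper: both directions hinge on the Gram factorization $A^{*}=BB^{\prime}$, with sufficiency obtained by taking a spectral square root of $A^{*}$ and building $Z_{i}=\beta_{k}^{\prime}f+\varepsilon_{i}$ with $\mathrm{var}(\varepsilon_{i})=1-\rho_{kk}>0$, and necessity by reading off $A^{*}=BB^{\prime}\succeq0$ from the group loadings. The only differences are cosmetic --- the paper additionally invokes its Lemma~\ref{lemma1} to confirm positive definiteness of the constructed matrix, and it asserts $\beta_{k}^{\prime}\beta_{k}<1$ at singleton groups with exactly the same implicit non-degeneracy convention on $\varepsilon_{i}$ that you correctly flag as the delicate point.
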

\begin{proof}
Suppose that $A^{*}$ is positive semi-definite (psd) with $\rho_{kk}<1$
for all $k$, then it can be expressed as $A^{*}=BB^{\prime}$, where
$\beta_{k}^{\prime}\beta_{k}=\rho_{kk}<1$. This shows that $C$ admits
a factor representation. (The matrix of factor loadings, $B$, is
not unique, but one valid choice is $B=Q\Lambda^{1/2}$, where $A^{*}=Q\Lambda Q^{\prime}$
is the eigendecomposition of $A^{\ast}$). That the resulting correlation
matrix, $C$, is nonsingular follows from Lemma \ref{lemma1}.

Now suppose $C$ is a nonsingular block correlation matrix that is
given from CF model with factor loadings $B=(\beta_{1},\ldots,\beta_{K})^{\prime}$.
Then it follows that $\rho_{kl}=\beta_{k}^{\prime}\beta_{l}$, such
that $A^{\ast}=BB^{\prime}$ is psd with $\rho_{kk}=\beta_{k}^{\prime}\beta_{k}<1$
for all $k$. This completes the proof.
\end{proof}
\begin{lem}
\label{lemma1}Suppose that the matrix $A^{*}$ is positive semi-definite
with $\rho_{kk}<1$, for all $k=1,\ldots,K$. Then the corresponding
block correlation matrix, $C$, is positive definite.
\end{lem}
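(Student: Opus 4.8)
The plan is to exhibit the block correlation matrix $C$ explicitly as a sum of a positive definite diagonal-ish piece and a positive semi-definite rank-contribution coming from $A^*$, so that positive definiteness is immediate. Concretely, within group $k$ write $C_{[k,k]} = (1-\rho_{kk})I_{n_k} + \rho_{kk}\,\mathbf{1}_{n_k}\mathbf{1}_{n_k}'$, where $\mathbf{1}_{n_k}$ is the vector of ones, and $C_{[k,l]} = \rho_{kl}\,\mathbf{1}_{n_k}\mathbf{1}_{n_l}'$ for $k\neq l$. Then, collecting the blocks, $C = D + S A^* S'$, where $D = \mathrm{diag}\big((1-\rho_{11})I_{n_1},\ldots,(1-\rho_{KK})I_{n_K}\big)$ and $S\in\mathbb{R}^{n\times K}$ is the block-diagonal ``group indicator'' matrix whose $k$-th column is $\mathbf{1}_{n_k}$ placed in the rows belonging to ${\rm G}_k$ (and zeros elsewhere). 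One should verify this identity block by block: the diagonal blocks of $S A^* S'$ are $\rho_{kk}\mathbf{1}_{n_k}\mathbf{1}_{n_k}'$ and the off-diagonal blocks are $\rho_{kl}\mathbf{1}_{n_k}\mathbf{1}_{n_l}'$, exactly matching $C$ after adding $D$.

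With this decomposition the result is essentially free: since $A^*$ is psd, $S A^* S' \succeq 0$ for any $S$; and since $\rho_{kk}<1$ for every $k$, the diagonal matrix $D$ is positive definite. Hence $C = D + S A^* S' \succ 0$, because for any nonzero $v\in\mathbb{R}^n$ we have $v'Cv = v'Dv + v'S A^* S' v \geq v'Dv > 0$. That completes the argument.

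I do not anticipate a serious obstacle here; the only thing requiring a little care is bookkeeping — getting the indexing of $S$ right so that $S A^* S'$ reproduces the block pattern of $C$, and noting that the argument does not even need $\rho_{kk}\geq 0$ or $|\rho_{kl}|\leq 1$ separately, only psd-ness of $A^*$ and $\rho_{kk}<1$. As an alternative route, if one prefers not to introduce $S$, one can reduce to the factor construction already used in the proof of Theorem \ref{thm:FactorRepresentation}: write $A^* = BB'$ with rows $\beta_k'$, set $b_i=\beta_k$ for $i\in{\rm G}_k$, and note $C = \mathrm{diag}(1-b_i'b_i) + \tilde B\tilde B'$ with $\tilde B$ the $n\times r$ matrix of stacked loadings; positive definiteness follows from $1-b_i'b_i = 1-\rho_{kk}>0$. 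Either way the structural identity ``$C$ = positive definite diagonal + psd'' is the whole content of the lemma.
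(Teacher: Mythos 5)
Your proof is correct, and it takes a genuinely different route from the paper's. The paper does not work with $C$ directly: it invokes the characterization from \citet{ArchakovHansen:CanonicalBlockMatrix} that an $n\times n$ block correlation matrix is positive definite if and only if an associated $K\times K$ matrix $A$ is positive definite (with $|\rho_{kk}|<1$), and then applies the same ``positive definite plus positive semidefinite'' decomposition at the $K\times K$ level, writing $A=\Lambda_{n}^{1/2}A^{*}\Lambda_{n}^{1/2}+\Psi_{1-\rho}$. You instead perform the analogous decomposition directly at the $n\times n$ level, $C=D+SA^{*}S^{\prime}$ with $D=\mathrm{diag}\bigl((1-\rho_{11})I_{n_{1}},\ldots,(1-\rho_{KK})I_{n_{K}}\bigr)$, which I have checked block by block and which is exactly right (your alternative via $\tilde{B}=SB$ is the same identity in factor-loading clothing). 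What each buys: your argument is self-contained and elementary, needing no external theorem, and makes transparent that the lemma is nothing but ``pd diagonal plus psd Gram matrix''; the paper's argument keeps the computation at the $K\times K$ scale and ties the lemma to the canonical reduction $C\leftrightarrow A$ that the paper reuses elsewhere (e.g.\ in the discussion following the lemma, where $A^{*}=\Lambda_{n}^{-1/2}(A-\Psi_{1-\rho})\Lambda_{n}^{-1/2}$ is used to show the converse fails). Your observation that only $\rho_{kk}<1$ and psd-ness of $A^{*}$ are needed is also correct, since these already force $|\rho_{kl}|\leq\sqrt{\rho_{kk}\rho_{ll}}<1$.
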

\begin{proof}
From \citet{ArchakovHansen:CanonicalBlockMatrix} it follows that
a block correlation matrix is positive definite if and only if
\[
A=\left[\begin{array}{cccc}
1+\left(n_{1}-1\right)\rho_{11} & \rho_{12}\sqrt{n_{1}n_{2}} & \cdots & \rho_{1K}\sqrt{n_{1}n_{K}}\\
\rho_{21}\sqrt{n_{1}n_{2}} & 1+\left(n_{2}-1\right)\rho_{22}\\
\vdots &  & \ddots\\
\rho_{K1}\sqrt{n_{K}n_{1}} &  &  & 1+\left(n_{K}-1\right)\rho_{KK}
\end{array}\right],
\]
is positive definite with $|\rho_{kk}|<1$, for all $k$. This matrix
can be expressed as $A=\Lambda_{n}^{1/2}A^{*}\Lambda_{n}^{1/2}+\Psi_{1-\rho}$,
where
\[
\Psi_{1-\rho}=\left[\begin{array}{ccc}
1-\rho_{11} & \cdots & 0\\
\vdots & \ddots & \vdots\\
0 & \cdots & 1-\rho_{KK}
\end{array}\right],\qquad\text{and}\qquad\ensuremath{\Lambda_{n}=\left[\begin{array}{ccc}
n_{1} &  & 0\\
 & \ddots\\
0 &  & n_{K}
\end{array}\right]}.
\]
If matrix $A^{*}$ is psd with $\rho_{kk}<1$, then $\Psi_{1-\rho}$
is positive definite and $\Lambda_{n}^{1/2}A^{*}\Lambda_{n}^{1/2}$
is psd. It now follows that $A$ is positive definite because it is
the sum of a positive definite matrix and a positive semidefinite
matrix, and from \citet[theorem 1]{ArchakovHansen:CanonicalBlockMatrix}
it follows that the corresponding block correlation matrix is also
positive definite.\medskip{}
\end{proof}
Although Lemma \ref{lemma1} shows that a psd $A^{*}$ with $\rho_{kk}<1$
implies that $A$ is positive definite, the converse is not true.
We have that $A^{*}=\Lambda_{n}^{-1/2}\left(A-\Psi_{1-\rho}\right)\Lambda_{n}^{-1/2}$,
which is psd if and only if $A-\Psi_{1-\rho}$ is psd. The latter
is not guaranteed. The simplest example is if a within-group correlation
is negative, since the diagonal elements of $A-\Psi_{1-\rho}$ equal
$n_{k}\rho_{kk}$, $k=1,\ldots,K$. Another, less obvious, example
is the following block correlation matrix:
\begin{equation}
C=\left[\begin{array}{ccccccccc}
1 & \bullet & \bullet & \cellcolor{gray!15}\bullet & \cellcolor{gray!15}\bullet & \cellcolor{gray!15}\bullet & \bullet & \bullet & \bullet\\
0.70 & 1 & \bullet & \cellcolor{gray!15}\bullet & \cellcolor{gray!15}\bullet & \cellcolor{gray!15}\bullet & \bullet & \bullet & \bullet\\
0.70 & 0.70 & 1 & \cellcolor{gray!15}\bullet & \cellcolor{gray!15}\bullet & \cellcolor{gray!15}\bullet & \bullet & \bullet & \bullet\\
\cellcolor{gray!15}0.58 & \cellcolor{gray!15}0.58 & \cellcolor{gray!15}0.58 & 1 & \bullet & \bullet & \cellcolor{gray!15}\bullet & \cellcolor{gray!15}\bullet & \cellcolor{gray!15}\bullet\\
\cellcolor{gray!15}0.58 & \cellcolor{gray!15}0.58 & \cellcolor{gray!15}0.58 & 0.63 & 1 & \bullet & \cellcolor{gray!15}\bullet & \cellcolor{gray!15}\bullet & \cellcolor{gray!15}\bullet\\
\cellcolor{gray!15}0.58 & \cellcolor{gray!15}0.58 & \cellcolor{gray!15}0.58 & 0.63 & 0.63 & 1 & \cellcolor{gray!15}\bullet & \cellcolor{gray!15}\bullet & \cellcolor{gray!15}\bullet\\
0.54 & 0.54 & 0.54 & \cellcolor{gray!15}0.19 & \cellcolor{gray!15}0.19 & \cellcolor{gray!15}0.19 & 1 & \bullet & \bullet\\
0.54 & 0.54 & 0.54 & \cellcolor{gray!15}0.19 & \cellcolor{gray!15}0.19 & \cellcolor{gray!15}0.19 & 0.71 & 1 & \bullet\\
0.54 & 0.54 & 0.54 & \cellcolor{gray!15}0.19 & \cellcolor{gray!15}0.19 & \cellcolor{gray!15}0.19 & 0.71 & 0.71 & 1
\end{array}\right],\label{eq:3x3block}
\end{equation}
which is a positive definite, it's smallest eigenvalue is $\lambda_{{\rm min}}(C)=0.26$.
However, 
\[
A^{*}=\left[\begin{array}{ccc}
0.70 & 0.58 & 0.54\\
0.58 & 0.63 & 0.19\\
0.54 & 0.19 & 0.71
\end{array}\right],
\]
has a negative eigenvalue, $\lambda_{\min}(A^{\ast})=-0.02$, and
$C$ can therefore not be expresses as a CF model. It seems unjustified
that the Clustered Factor (CF) model precludes the correlation matrix
in (\ref{eq:3x3block}) (and similar matrices) in advance, because
there does not appear to be anything bizarre or unusual about this
particular block correlation matrix.

Another question is how the number of factors, $r$, are needed to
to generate a given correlation matrix. We address this in the following
Theorem.
\begin{thm}[Minimal factor model]
If $C$ admits a CF model, then the minimal number of factors is
$r={\rm rank}\left(A^{\ast}\right)$. 
\end{thm}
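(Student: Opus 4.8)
The plan is to sandwich the minimal number of factors between a lower bound and a matching construction. For the lower bound, I would argue as follows. In any CF representation of $C$ with $r$ factors, the group-loading matrix $B=(\beta_{1},\ldots,\beta_{K})^{\prime}\in\mathbb{R}^{K\times r}$ satisfies $\rho_{kl}=\beta_{k}^{\prime}\beta_{l}$ for all $k,l$ (the diagonal case $\rho_{kk}=\beta_{k}^{\prime}\beta_{k}$ included), so by the definition of $A^{*}$ in (\ref{eq:Astar}) we have $A^{*}=BB^{\prime}$. Since $\mathrm{rank}(BB^{\prime})=\mathrm{rank}(B)\le r$, every CF representation uses at least $\mathrm{rank}(A^{*})$ factors. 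This part is immediate.

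For the matching upper bound I would reuse the eigendecomposition construction from the proof of Theorem \ref{thm:FactorRepresentation}, but retain only the nonzero eigenvalues. Because $C$ admits a CF model, Theorem \ref{thm:FactorRepresentation} gives that $A^{*}$ is positive semi-definite; write $A^{*}=Q\Lambda Q^{\prime}$ with $\Lambda=\mathrm{diag}(\lambda_{1},\ldots,\lambda_{K})$ and exactly $s:=\mathrm{rank}(A^{*})$ positive eigenvalues $\lambda_{1}\ge\cdots\ge\lambda_{s}>0$. Let $Q_{s}\in\mathbb{R}^{K\times s}$ collect the corresponding eigenvectors and $\Lambda_{s}=\mathrm{diag}(\lambda_{1},\ldots,\lambda_{s})$, and set $B=Q_{s}\Lambda_{s}^{1/2}\in\mathbb{R}^{K\times s}$. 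Then $BB^{\prime}=A^{*}$, so $\beta_{k}^{\prime}\beta_{k}=\rho_{kk}<1$ for all $k$ (strictly, since $C$ is nonsingular), hence $\mathrm{var}(\varepsilon_{i})=1-\beta_{k}^{\prime}\beta_{k}>0$ is a legitimate idiosyncratic variance; and by Lemma \ref{lemma1} the induced block correlation matrix is nonsingular and equals $C$. This exhibits a valid CF model with exactly $s$ factors. Combining the two parts, the minimal number of factors is $r=\mathrm{rank}(A^{*})$.

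I do not expect a genuine obstacle here: once Theorem \ref{thm:FactorRepresentation} and Lemma \ref{lemma1} are in hand, the statement is just the rank-factorization of a positive semi-definite matrix. The only point needing a word of care is the singleton-group case flagged in the Remark: when some $n_{k}=1$ the within-group entry $\rho_{kk}$ is not determined by $C$, so ``$\mathrm{rank}(A^{*})$'' should be read as the minimum of $\mathrm{rank}(A^{*})$ over the admissible completions (those $\rho_{k_{j}k_{j}}<1$ that keep $A^{*}$ positive semi-definite); after fixing any such completion the two-sided argument above applies verbatim.
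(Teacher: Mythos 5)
Your proposal is correct and follows essentially the same route as the paper's proof: the lower bound from $A^{*}=BB^{\prime}$ forcing $\mathrm{rank}(A^{*})\le\tilde{r}$ in any CF representation, and the matching upper bound from a rank factorization of the positive semi-definite matrix $A^{*}$. Your additional care about the singleton-group completion and the appeal to Lemma \ref{lemma1} for nonsingularity are welcome refinements but do not change the argument.
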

\begin{proof}
If $C$ is generated by a factor model with $\tilde{r}$ factors,
then $A^{*}=\tilde{B}\tilde{B}^{\prime}$ where $\tilde{B}\in\mathbb{R}^{K\times\tilde{r}}$.
We have $r=\mathrm{rank}(A^{\ast})\leq\min(K,\tilde{r})$, and we
can express $A^{*}=BB^{\prime}$ where ${\rm rank}\left(B\right)=r$,
which shows that $C$ admits a factor structure with $r$ orthogonal
factors.
\end{proof}
\medskip{}

Theorem 2 shows that increasing the number of factors beyond $K$
does not broaden the range of attainable correlation matrices. The
(maximum) number of free parameters in the CF model is $K(K+1)/2$,
because we can always rotate the latent factors, $f$, such that $B$
is upper (or lower) triangular. 

\subsection{Testing if $C$ Admits a Factor Representation}

Theorem \ref{thm:FactorRepresentation} shows that the block correlation
matrices that admits a CF representation, are characterized the smallest
eigenvalue of $A^{\ast}$ being nonnegative, i.e. $\lambda_{\min}(A^{\ast})\geq0$.
Here $A^{\ast}$ is defined by (\ref{eq:Astar}) where we set $\rho_{kk}=1$
whenever $n_{k}=1$. 

In practice, it is therefore relatively straight forward to test the
CF representation using the null hypothesis:
\[
H_{F}:\lambda_{\min}(A^{\ast})\geq0.
\]

\subsection{Determining the Minimal Number of Factors}

Given a block correlation matrix with a CF structure, we can proceed
to estimate the minimum number of factors, $r$, as defined by the
rank of $A^{\ast}$. This could be done with the sequential procedure
propose in \citet{Pantula:1989}, which is commonly used to determine
the cointegration rank in VAR models, see \citet[1991]{Johansen88}.\footnote{The same testing principle is used for lag-length selection in time
series model, see \citet{NgPerron01}, and in multiple comparisons
to determine the model confidence set, see \citet{HansenLundeNasonMCS}.}\nocite{Johansen91} 

Let $\lambda_{1}\geq\lambda_{2}\geq\cdots\geq\lambda_{K}$ denote
the eigenvalues of $A^{\ast}$, and consider the hypothesis $H_{F,q}:\lambda_{r+1}=\cdots=\lambda_{K}=0$,
implying that $A^{\ast}$ is psd with $\mathrm{rank}(A^{\ast})\leq r$.
The sequential procedure begins by testing $H_{F,0}$ against $H_{F,K}$.
In the event of a rejection, we proceed to test $H_{F,1}$ against
$H_{F,K}$, and so forth until a null hypothesis is not rejected.
We can set $\hat{r}=r^{\ast}$, where $r^{\ast}$ is the first instance
(the smallest $r$) where $H_{F,r}$ is not rejected. The degrees
of freedom in $B$ for a model with $r\leq K$ factors is $r[K+(K-r)+1]/2$.

\section{A Better Parametrization Block Correlation Matrices }

A new parametrization of correlation matrices was proposed in \citet{ArchakovHansen:Correlation},
and is defined by
\begin{equation}
\gamma={\rm vecl}\left(\log C\right)\in\mathbb{R}^{d},\qquad d=n(n-1)/2,\label{eq:LogCtrans}
\end{equation}
where ${\rm vecl}(\cdot)$ extracts and vectorizes the elements below
the diagonal and $\log C$ is the matrix logarithm of the correlation
matrix.\footnote{For a nonsingular correlation matrix, we have $\log C=Q\log\Lambda Q^{\prime}$,
where $C=Q\Lambda Q^{\prime}$ is the spectral decomposition of $C$,
so that $\Lambda$ is a diagonal matrix with the eigenvalues of $C$.} The identity (\ref{eq:LogCtrans}) defines a one-to-one mapping between
$\mathbb{R}^{d}$ and the set of non-singular $n\times n$ correlation
matrices. 

Because the matrix logarithm preserves the block structure in $C$,
see \citet{ArchakovHansen:CanonicalBlockMatrix}, $\gamma$ will contain
many ``duplicates''. Thus, we can parametrize block correlation
matrices with $\eta\in\mathbb{R}^{q}$, where $\eta$ is a subvector
of $\gamma$ and $q\leq K(K+1)/2$. The follow example will serve
as an illustration,{\small{}
\[
\ensuremath{\underbrace{\left[\begin{array}{cccccc}
1.0 & 0.4 & 0.4 & \cellcolor{gray!15}0.3 & \cellcolor{gray!15}0.3 & \cellcolor{gray!15}0.3\\
0.4 & 1.0 & 0.4 & \cellcolor{gray!15}0.3 & \cellcolor{gray!15}0.3 & \cellcolor{gray!15}0.3\\
0.4 & 0.4 & 1.0 & \cellcolor{gray!15}0.3 & \cellcolor{gray!15}0.3 & \cellcolor{gray!15}0.3\\
\cellcolor{gray!15}0.3 & \cellcolor{gray!15}0.3 & \cellcolor{gray!15}0.3 & 1.0 & 0.6 & 0.6\\
\cellcolor{gray!15}0.3 & \cellcolor{gray!15}0.3 & \cellcolor{gray!15}0.3 & 0.6 & 1.0 & 0.6\\
\cellcolor{gray!15}0.3 & \cellcolor{gray!15}0.3 & \cellcolor{gray!15}0.3 & 0.6 & 0.6 & 1.0
\end{array}\right]}_{=C}\quad\underbrace{\left[\begin{array}{rrrrrr}
-.19 & .326 & .326 & \cellcolor{gray!15}.162 & \cellcolor{gray!15}.162 & \cellcolor{gray!15}.162\\
.326 & -.19 & .326 & \cellcolor{gray!15}.162 & \cellcolor{gray!15}.162 & \cellcolor{gray!15}.162\\
.326 & .326 & -.19 & \cellcolor{gray!15}.162 & \cellcolor{gray!15}.162 & \cellcolor{gray!15}.162\\
\cellcolor{gray!15}.162 & \cellcolor{gray!15}.162 & \cellcolor{gray!15}.162 & -.38 & .533 & .533\\
\cellcolor{gray!15}.162 & \cellcolor{gray!15}.162 & \cellcolor{gray!15}.162 & .533 & -.38 & .533\\
\cellcolor{gray!15}.162 & \cellcolor{gray!15}.162 & \cellcolor{gray!15}.162 & .533 & .533 & -.38
\end{array}\right]}_{=\log C}}.
\]
}Here we can use $\eta=(\begin{array}{ccc}
0.326 & \cellcolor{gray!15}0.162 & 0.533\end{array})^{\prime}$ as the condensed vector parametrization of the block correlation
matrix. For a given block partitioning, $(n_{1},\ldots,n_{K})$, any
non-singular block correlation matrix will map to a unique vector
$\eta$, and any vector $\eta\in\mathbb{R}^{q}$ will map to a unique
non-singular block correlation matrix, see \citet{ArchakovHansen:CanonicalBlockMatrix}. 

It is straightforward to add additional structure onto the $\eta$-parametrization,
for instance by restricting $\eta$ to be in a subspace of lower dimension
than $q$. For a multivariate GARCH model, these ideas are explored
in \citet{ArchakovHansenLundeMRG}.

\citet{CrealKim:BayesianBlockCorr} recently adopted this parametrization
for Bayesian modeling of block correlation matrices. An attractive
feature of this parametrization, it that it facilitates priors with
full support on the entire set of non-singular block correlation matrices.

\section{Summary\label{sec:Summery}}

We have characterized the class of block correlation matrices that
can be expressed as a clustered factor model. While the clustered
factor model serves as a valuable tool for generating clustered correlation
structures, it does introduce unnecessary constraints on the correlation
matrix, which may limit its practical relevance for some empirical
problems. The alternative parametrization, which is based on the matrix
logarithm of the correlation matrix, seems better suited for the modeling
of block correlation matrices. It avoids the imposition of superfluous
constraints on the correlation matrix and, if needed, it provides
a flexible framework for further reduction of the degrees of freedom.

\bibliographystyle{apalike}
\bibliography{prh}

\end{document}